\newtheorem{theorem}{Theorem}[section]    
\newtheorem{lemma}[theorem]{Lemma}
\newtheorem{proposition}[theorem]{Proposition}
\newtheorem{corollary}[theorem]{Corollary}
\newcommand{\la}{\lambda}
\newcommand{\ep}{\epsilon}
\begin{document}
\title{QUANTUM ENTANGLEMENT AND APPROXIMATION BY POSITIVE MATRICES}
\author{Xiaofen Huang}
\address{School of Mathematics and Statistics,
Hainan Normal University,Haikou, Hainan 571158, China}
\email{huangxf1206@googlemail.com}
\author{Naihuan Jing*}
\address{School of Sciences, South China University of Technology, Guangzhou 510640, China
and Department of Mathematics, North Carolina State University, Raleigh, NC 27695, USA}
\email{jing@math.ncsu.edu}
\thanks{*Corresponding author}
\keywords{Quantum computation; density matrices; nonlinear
optimization.}

\begin{abstract}
We give an exact solution to the nonlinear optimization problem of
approximating a Hermitian matrix by positive semi-definite
matrices. Our algorithm was then used to judge whether a quantum
state is entangled or not. We show that the exact approximation of a
density matrices by tensor product of positive semi-definite
operators is determined by the additivity property of the density
matrix.
\end{abstract}
\maketitle


\section{Introduction}\label{aba:sec1}
Quantum entanglement is one of the most interesting features in
quantum computation \cite{NC, P} as it is vital for quantum dense
coding, quantum error-correcting codes, teleportation and also
responsible for fast quantum algorithms.   Much efforts have been
made to find criteria for quantum separability: the Bell
inequalities \cite{B}, PPT (positive partial transposition)
\cite{P}, reduction criterion \cite{CAG, HH}, majorization criterion
\cite{NK}, entanglement witnesses \cite{HLVC, T, LKCH}, realignment
\cite{R, CW} and generalized realignment methods \cite{ACFH}.
Nevertheless the problem is still not fully solved except for lower
rank cases \cite{HLVC, AFG, FGW}.

In \cite{FJS} we proposed a new method to judge if a density matrix
is separable. We first decompose the density matrix as a tensor
product of hermitian matrices, and then we reduce the separability
problem to that of finding when the hermitian matrix becomes
positive semi-definite. The strategy was to solve the separability
by two steps: first one finds a tensor decomposition of the density
matrix by hermitian matrices, and then one approximates the
hermitian matrices by positive semi-definite matrices if possible.
Although it was proved that a density matrix is separable if and
only if the separability indicator is non-negative, it is highly
nontrivial to actually compute this indicator. In this sense the
method is also similar to many of its predecessors.

In the current paper we approach the question from a new angle by
giving an algorithm to compute the closest positive semi-definite
matrix to any given hermitian matrix. It is noted that this
optimization is not a linear problem so the usual QR decomposition
in \cite{HJn} does not work. By general theory of convex sets the
existence of the minimum is guaranteed but it is nontrivial to find
the exact solution due to the nonlinearity. In this paper we first
solve this optimization problem exactly using matrix theory. This
paves the way for us to attack the main problem of separability by
directly looking for an optimal approximation to a sum of matrices
by positive semi-definite matrices.

It turns out that the exact solution in the most important case can
be solved by Lie theoretic techniques. First we show that a
Hermitian matrix with two commuting summands can be approximated
term by term. Next we prove that if the summands of a Hermitian
matrix can be simultaneously made to upper triangular matrices, then
the exact approximation by positive semi-definite matrix can be done
term by term. Our method provides a direct way to approximate the
hermitian matrix by positive semi-definite matrices and thus the
separability problem can be solved theoretically in this sense.

It is interesting to note that similar (but stronger constrained)
matrix approximation also appears in finance, image processing, date
mining, and other areas such as resource allocation and industrial
process monitoring \cite{SH, ZW, BW}. Most methods used in these
problems are numerical algorithms that only give an approximation to
the solution. In some sense our results also provide the first exact
and analytical method in this direction, and we also hope that
conversely some of the numerical algorithms may be useful in quantum
entanglements.

\section{Approximation by Positive Definite Matrices}
Let $A$ be an $n\times n$ Hermitian matrix, and let $Q$ be a unitary
matrix such that $A=QDQ^{\dagger}$, where $D=diag(\alpha_1, \cdots,
\alpha_n)$ and $\dagger$ is conjugation and transposition. The
signature $(p, q)$ of $A$ (cf. \cite{L}) is defined by
$p+q=rank(A)$. We can permute the columns of the matrix $Q$ so that the eigenvalues
of $A$ are arranged in the following order:
$$
\alpha_1\geq \cdots\geq\alpha_p>0>\alpha_{p+1}\geq \cdots \geq
\alpha_{p+ q}, \alpha_{p+q+1}=\cdots =\alpha_n=0.
$$
We further define
\begin{equation} \label{posdecomp}
A=A_+-A_-,
\end{equation}
where $A_{\pm}=QD_{\pm}Q^{\dagger}$, and
\begin{align*}
D_+&=diag(\alpha_1, \cdots, \alpha_p, 0, \cdots, 0)\\
D_-&=diag(0, \cdots, 0, -\alpha_{p+1}, \cdots, -\alpha_{p+q}, 0, \cdots, 0)
\end{align*}
formed by positive (negative) eigenvalues respectively. We remark
that our definition of the positive and negative semi-definite parts
of $A$ is independent from our choice of $Q$. In general, if $A$ is
diagonalized by a unitary matrix $Q$ as follows:
$$A=QDQ^{\dagger}=Qdiag(\alpha_1,
\cdots, \alpha_n)Q^{\dagger},
$$
then we have
$$A_{\pm}=Qdiag(\alpha_1^{\pm},
\cdots, \alpha_n^{\pm})Q^{\dagger}, \qquad
\alpha_i^{\pm}=\frac{|\alpha_i|\pm \alpha_i}2.
$$

It is clear that $A$ is positive semi-definite if and only all
eigenvalues of $A$ are non-negative. Therefore both $A_{\pm}$ are
positive semi-definite matrices. It is easy to see that the
decomposition (\ref{posdecomp}) of $A$ into a difference of positive
semi-definite matrices is unique up to positive definite matrices,
i.e., if $A=A_+'-A_-'$, where $A_{\pm}'$ are positive semi-definite,
then $A_{\pm}=A_{\pm}'+P$ with a positive semi-definite matrix $P$.

If $A$ and $B$ are commuting positive semi-definite matrices, so is
their product. If they are not commutative, then $AB$ is in general
not a positive semi-definite matrix, as $AB$ may not even be
hermitian.

\begin{lemma}\label{prod} Let A and B be positive semi-definite Hermitian matrices, then
the eigenvalues of $AB$ are all non-negative.
\end{lemma}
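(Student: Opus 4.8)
The plan is to replace the (generally non-Hermitian) product $AB$ by a genuinely Hermitian positive semi-definite matrix having the same spectrum. Since $A$ is positive semi-definite Hermitian, the spectral decomposition $A=Q\,\diag(\alpha_1,\dots,\alpha_n)\,Q^{\dagger}$ with all $\alpha_i\ge 0$ lets me define the positive semi-definite square root $A^{1/2}=Q\,\diag(\alpha_1^{1/2},\dots,\alpha_n^{1/2})\,Q^{\dagger}$, so that $A=A^{1/2}A^{1/2}$. Setting $X=A^{1/2}$ and $Y=A^{1/2}B$, we have $AB=XY$ while $YX=A^{1/2}BA^{1/2}$.

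Next I would invoke the elementary fact that for any two $n\times n$ matrices $X,Y$ the products $XY$ and $YX$ have the same characteristic polynomial, hence the same eigenvalues with multiplicities. This immediately gives that $AB$ and $M:=A^{1/2}BA^{1/2}$ share the same spectrum. It then remains to observe that $M$ is Hermitian, $M^{\dagger}=(A^{1/2})^{\dagger}B^{\dagger}(A^{1/2})^{\dagger}=A^{1/2}BA^{1/2}=M$, and positive semi-definite: for every $v\in\mathbb{C}^n$ one has $v^{\dagger}Mv=(A^{1/2}v)^{\dagger}B(A^{1/2}v)\ge 0$ because $B$ is positive semi-definite. Consequently all eigenvalues of $M$, and therefore all eigenvalues of $AB$, are non-negative.

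This argument is short and I do not expect a genuine obstacle; the only point that deserves care is the claim that $XY$ and $YX$ have equal characteristic polynomials. If one prefers to avoid quoting it, one can first handle the case where $A$ is invertible, in which $AB=A^{1/2}MA^{-1/2}$ is an honest similarity exhibiting $AB$ as conjugate to the positive semi-definite $M$, and then pass to a general positive semi-definite $A$ by applying this to $A+\epsilon I$ and letting $\epsilon\downarrow 0$, using continuity of the roots of the characteristic polynomial in the matrix entries. Thus the essential content is just ``square root $+$ similarity $+$ continuity,'' none of which is hard.
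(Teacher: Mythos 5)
Your argument is correct, but it takes a genuinely different route from the paper. You reduce everything to the Hermitian matrix $A^{1/2}BA^{1/2}$: using the positive semi-definite square root of $A$ and the standard fact that $XY$ and $YX$ have the same characteristic polynomial, you exhibit $AB$ as isospectral to a matrix that is manifestly positive semi-definite, and you are done with no limiting process. The paper instead argues directly on an eigenvector: for invertible $A$, from $ABx=\la x$ it deduces $x^{\dagger}Bx=\la\, x^{\dagger}A^{-1}x$, so $\la$ is a ratio of a non-negative and a positive quadratic form (the paper writes this slightly loosely as strict positivity), and then handles singular $A$ by perturbing to $A+\ep I$ and letting $\ep\to 0$, invoking continuity of eigenvalues. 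Your approach buys a stronger structural conclusion for free --- $AB$ has the same spectrum, with multiplicities, as a Hermitian positive semi-definite matrix, so in particular its eigenvalues are real and non-negative without any separate argument --- and it avoids the continuity-of-eigenvalues step, whose justification (roots of the characteristic polynomial depend continuously on the entries) is the least elementary ingredient of the paper's proof. The paper's argument, on the other hand, needs neither the square root construction nor the $XY$/$YX$ lemma, at the cost of the $\ep$-limit; your own fallback variant (similarity $AB=A^{1/2}(A^{1/2}BA^{1/2})A^{-1/2}$ for invertible $A$, then $\ep\downarrow 0$) is essentially a hybrid of the two and also works.
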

\begin{proof} If $A$ is invertible, and let $\la$ be an eigenvalue
of $AB$, then for some vector $x\neq 0$,
$$ABx=\la x.$$
Hence $Bx=\la A^{-1}x$. Note that both $A^{-1}$ and $B$ are
positive, so
$$x^{\dagger}Bx=\la x^{\dagger}A^{-1}x>0,$$
Subsequently $\la>0$. In general, if $A$ is singular, then for any
$\epsilon>0$ the matrix $A+\ep I$ is positive definite. Therefore
any eigenvalue $\la=\la(\ep)$ of $(A+\ep I)B$ is positive. Letting
$\ep\rightarrow 0$, we see that $lim_{\ep\to 0}\la \geq 0$, i.e. any
eigenvalue of $AB$ is non-negative.
\end{proof}

For any matrix $A$, the Frobenius norm is defined to be $||A||_F$
$=(tr(AA^{\dagger}))^{1/2}$, which is also equal to the sum of
squares of singular values of $A$.

\begin{theorem} Let $A$ be an $n\times n$ hermitian matrix,
then for any positive semi-definite matrix $B$ we have
\begin{equation}
||A-B||_F\geq ||A_-||_F
\end{equation}
with equality when $B=A_+$. i.e. the closest positive semi-definite
matrix to $A$ is given by $A_+$.
\end{theorem}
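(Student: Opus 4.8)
The plan is to reduce the whole statement to a single trace computation together with Lemma~\ref{prod}. Since $A-A_+=-A_-$ and the Frobenius norm of a Hermitian matrix is unchanged under sign, $||A-A_+||_F=||A_-||_F$, so the asserted value is attained at $B=A_+$; the content is therefore the lower bound, and I would prove it by squaring.

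First I would write $||A-B||_F^2=tr\big((A-B)^2\big)$, which is legitimate because $A-B$ is Hermitian, and expand using the decomposition $A=A_+-A_-$ from (\ref{posdecomp}). Since $A_+$ and $A_-$ are simultaneously diagonalized with disjoint supports, $A_+A_-=A_-A_+=0$, hence $tr(A^2)=tr(A_+^2)+tr(A_-^2)$. Also $tr(AB)=tr(A_+B)-tr(A_-B)$. Substituting these and cancelling the common $tr(A_-^2)$, the desired inequality $||A-B||_F^2\ge ||A_-||_F^2$ becomes equivalent to
$$tr(A_+^2)-2\,tr(A_+B)+tr(B^2)+2\,tr(A_-B)\ \ge\ 0.$$

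The last step is the key regrouping: the first three terms assemble into $tr\big((A_+-B)^2\big)=||A_+-B||_F^2\ge 0$, while $tr(A_-B)\ge 0$. The latter is exactly where positivity of $B$ enters: $A_-$ and $B$ are both positive semi-definite, so by Lemma~\ref{prod} every eigenvalue of $A_-B$ is non-negative, and $tr(A_-B)$ — which is real, since $A_-$ and $B$ are Hermitian — is the sum of those eigenvalues. Thus the displayed quantity is a sum of two non-negative numbers, giving $||A-B||_F^2\ge ||A_-||_F^2$ and hence the theorem. I expect the only genuine subtlety to be anticipating this regrouping, i.e. recognizing in advance that the cross term $tr(A_-B)$ is precisely the place where Lemma~\ref{prod} must be used; once $A_+A_-=0$ is exploited, everything else is routine bookkeeping.
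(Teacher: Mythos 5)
Your proposal is correct and follows essentially the same route as the paper: expand $\|A-B\|_F^2$, complete the square to isolate $\|A_+-B\|_F^2\ge 0$, and use Lemma~\ref{prod} to get $tr(A_-B)\ge 0$, with equality at $B=A_+$. The only cosmetic difference is that you invoke $A_+A_-=0$ explicitly to split $tr(A^2)=tr(A_+^2)+tr(A_-^2)$, a fact the paper uses implicitly when identifying $\|A\|_F^2-\|A_+\|_F^2$ with $\|A_-\|_F^2$.
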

\begin{proof} Omitting the subscript
in the Frobenius norm, we have for any positive semi-definite matrix
$B$
\begin{align*}
\|A-B\|^2&=tr(A-B)^2=tr(A^2)-2tr(AB)+tr(B^2)\\
&=tr(A^2)+2tr(A_-B)+tr(B^2)-2tr(A_+B)\\
&=tr(A^2)+2tr(A_-B)+tr(B-A_+)^2-tr(A_+^2)
\end{align*}
by using $tr(AB)=tr(BA)$ and completing square. Since $B$ and
$A_{\pm}$ are positive semi-definite, then $tr(A_-B)\geq 0$ for any
$B$ by Lemma \ref{prod}. Therefore, it follows that for any positive
semi-definite matrix $B$
\begin{align*}
\|A-B\|^2&\geq tr(A^2)+tr(B-A_+)^2-tr(A_+^2)\\
&=\|A\|^2-\|A_+\|^2+\|B-A_+\|^2\\
&\geq \|A\|^2-\|A_+\|^2
\end{align*}
where the equality is obtained when $B=A_+$.
\end{proof}

We remark that similar (Toeplitz and/or correlation matrix)
approximation with stronger constraints has been studied in finance
and image processing \cite{SH, ZW}. Our result is more general and
stronger in the sense that we do not require that the matrix to be
either Toeplitz or correlation matrix (real positive definite with
unit diagonal). Furthermore our result is analytical and exact, as
no numerical approximation is needed for the solution.

\begin{corollary} Let A and B be any two Hermitian matrices,
then for any positive semi-definite matrix C of the same size as
$A\otimes B$, we have
$$\|A\otimes B-C \|_{F}\geq\|(A \otimes B) _{-}\|_F=\|A_+\otimes B_{-}+A_{-}\otimes B_{+}\|_F.$$
The equality holds when $C=(A \otimes B)_+=A_+ \otimes B_+ +A_-
\otimes B_-$.
\end{corollary}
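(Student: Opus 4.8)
The plan is to reduce the tensor‑product statement to the Theorem already proved. The key observation is that $A\otimes B$ is itself a Hermitian matrix (since $A,B$ are Hermitian), so the Theorem applies directly: the closest positive semi‑definite matrix to $A\otimes B$ in Frobenius norm is $(A\otimes B)_+$, and the minimal distance is $\|(A\otimes B)_-\|_F$. Thus the only real content is to identify the positive and negative semi‑definite parts of $A\otimes B$ explicitly in terms of those of $A$ and $B$.

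First I would diagonalize: write $A=QD_AQ^\dagger$ and $B=RD_BR^\dagger$ with $Q,R$ unitary and $D_A=\diag(\alpha_i)$, $D_B=\diag(\beta_j)$. Then $A\otimes B=(Q\otimes R)(D_A\otimes D_B)(Q\otimes R)^\dagger$, and $Q\otimes R$ is unitary, so by the coordinate‑free description of $(\cdot)_\pm$ given in the excerpt, $(A\otimes B)_\pm=(Q\otimes R)\,\mathrm{diag}\big((\alpha_i\beta_j)^\pm\big)\,(Q\otimes R)^\dagger$, where $x^\pm=\tfrac{|x|\pm x}{2}$. Now the elementary scalar identities $(\alpha\beta)^+=\alpha^+\beta^++\alpha^-\beta^-$ and $(\alpha\beta)^-=\alpha^+\beta^-+\alpha^-\beta^+$ — which follow by checking the four sign cases for $\alpha,\beta$ (and are trivial when either is zero) — translate, after conjugating back by $Q\otimes R$ and using $A_\pm=QD_A^\pm Q^\dagger$, $B_\pm=RD_B^\pm R^\dagger$ together with the mixed‑product property $(MN)\otimes(PQ)=(M\otimes P)(N\otimes Q)$, into the matrix identities
\[
(A\otimes B)_+=A_+\otimes B_++A_-\otimes B_-, \qquad (A\otimes B)_-=A_+\otimes B_-+A_-\otimes B_+.
\]

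With these identities in hand, the Corollary is immediate: applying the Theorem to the Hermitian matrix $A\otimes B$ gives $\|A\otimes B-C\|_F\ge\|(A\otimes B)_-\|_F$ for every positive semi‑definite $C$, with equality when $C=(A\otimes B)_+$; then substitute the two displayed identities to rewrite both the bound and the optimal $C$ in the stated form. I do not expect a genuine obstacle here — the main (minor) point to be careful about is justifying that $(\alpha\beta)^\pm$ decomposes as claimed over all sign patterns, and that the decomposition $D_A\otimes D_B$ really does have entries $\alpha_i\beta_j$ sorted appropriately; one should note the ordering of eigenvalues used in the excerpt's definition is irrelevant precisely because $(\cdot)_\pm$ was shown to be independent of the diagonalizing unitary. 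An alternative, slicker route avoids eigenvalues entirely: verify directly that $A_+\otimes B_++A_-\otimes B_-$ is positive semi‑definite (a sum of tensor products of positive semi‑definite matrices, hence positive semi‑definite), that $A_+\otimes B_-+A_-\otimes B_+$ is positive semi‑definite for the same reason, and that their difference equals $A\otimes B$; by the uniqueness‑up‑to‑positive‑definite remark after \eqref{posdecomp} one still needs to check one of them is the genuine $(\cdot)_\pm$, so the eigenvalue computation is the cleanest way to pin that down.
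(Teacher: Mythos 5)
Your proposal is correct and follows essentially the same route as the paper: diagonalize $A$ and $B$, conjugate by the unitary $Q_1\otimes Q_2$, split the eigenvalue products $\alpha_i\beta_j$ into their positive and negative parts to get $(A\otimes B)_+=A_+\otimes B_++A_-\otimes B_-$ and $(A\otimes B)_-=A_+\otimes B_-+A_-\otimes B_+$, and then invoke the approximation theorem for the Hermitian matrix $A\otimes B$.
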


\begin{proof} It is enough to show that $(A\otimes
B)_{\pm}=\sum_{\ep}A_{\ep}\otimes B_{\pm\ep}$, where $\ep={+, -}$.
Suppose $A$ and $B$ are diagonalized by $Q_1$ and $Q_2$
respectively:
\begin{align*}
A&=Q_1D_1Q_1^{\dagger}=Q_1diag(\alpha_1, \cdots, \alpha_m)Q_1^{\dagger}, \\
B&=Q_2D_2Q_2^{\dagger}=Q_2diag(\beta_1, \cdots,
\beta_n)Q_2^{\dagger}
\end{align*}
then we have
$$A\otimes B=(Q_1\otimes Q_2)(D_1\otimes D_2)(Q_1\otimes Q_2)^{\dagger} $$
As the eigenvalues $\sigma$ of $A\otimes B$ are $\la_i\beta_j$, thus
\begin{align*}
\sigma_{ij}^+&=\alpha_i^+\beta_j^++\alpha_i^-\beta_j^-, \\
\sigma_{ij}^-&=\alpha_i^+\beta_j^-+\alpha_i^-\beta_j^+, \\
\sigma_{ij}^{0}&=\alpha_i^0\beta_j+\alpha_i\beta_j^0.
\end{align*}
Since the zero eigenvalues do not contribute to the decomposition,
we have that  $(A\otimes B)_+=A_+\otimes B_+ +A_-\otimes B_-$, and
$(A\otimes B)_-=A_+\otimes B_-+A_-\otimes B_+$
\end{proof}

\section{Sums of matrices and estimates}

It appears that the decomposition of the Hermitian matrix $A$ into a
sum of two Hermitian matrices $B, C$ has a close relationship to our
problem. This problem has a much longer history in mathematics.

Horn \cite{Ho} defines the following concept for the Hermitian
matrices. Let $A$ and $B$ be  two $n\times n$ matrices with
eigenvalues $\alpha_i$ and $\beta_i$. For any subset $I$ we denote
$|I|=\sum_{i\in I} i$. The set $T_r^n$ of triples $(I, J, K)$ of
subsets of $\{1, \ldots, n\}$ of the same cardinality $r$ is defined
by first setting
\begin{equation*}
U_r^n=\{(I, J, K)| |I|+|J|=|K|+r(r+1)/2\},
\end{equation*}
then define $T_1^n=U_1^n$ and in general
\begin{align*}
T_r^n=\{(I, J, K)&\in U_r^n| \mbox{\, for all $p<r$ and all\, } (F,
G,
H)\in U_p^n \\
&  \sum_{f\in F}i_f+\sum_{g\in G}j_g\leq\sum_{h\in H}k_h+p(p+1)/2\},
\end{align*}

The following characterization of eigenvalues $\alpha, \beta,
\gamma$ of $C=A+B$ is proved in \cite{F}.

\begin{theorem}({\bf Horn's conjecture}) A triple $(\alpha, \beta,
\gamma)$ occurs as eigenvalues of $A, B, C$ such that $C=A+B$ if and
only if $|\alpha|+|\beta|=|\gamma|$ and inequalities
\begin{equation*}
\sum_{k\in K}\gamma_k\leq\sum_{i\in I}\alpha_i+\sum_{j\in J}\beta_j
\end{equation*}
hold for all triple $(I, J, K)$ in $T_r^n$ for all $r<n$.
\end{theorem}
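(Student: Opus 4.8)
The plan is to follow the route through geometric invariant theory, Schubert calculus, and the saturation theorem, since no elementary induction on $n$ is known; I would split the argument into necessity, sufficiency, and a purely combinatorial bridge between Horn's recursive set $T_r^n$ and Littlewood--Richardson data. Throughout it is classical that the set of admissible $(\alpha,\beta,\gamma)$ is a convex polytope (a moment polytope for the coadjoint action), so the real content is the description of its facets.

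Necessity is the relatively classical half. Taking traces immediately gives $|\alpha|+|\beta|=|\gamma|$. For the inequalities, fix $r<n$ and a triple $(I,J,K)$ for which the associated product of Schubert classes in the cohomology of the Grassmannian $\mathrm{Gr}(r,n)$ is nonzero (the combinatorial bridge below identifies this set with $T_r^n$). By Kleiman transversality the three Schubert varieties, taken with respect to the eigenflags of $A$, $B$ and $C$ respectively, have a point in common, i.e.\ there is an $r$-dimensional subspace $W$ meeting those flags in the prescribed dimensions. Restricting $C=A+B$ to $W$ and comparing Rayleigh quotients on $W$ --- bounding the relevant partial sum of the $\gamma$'s from above by partial sums of the $\alpha$'s and $\beta$'s --- yields $\sum_{k\in K}\gamma_k\le\sum_{i\in I}\alpha_i+\sum_{j\in J}\beta_j$.

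Sufficiency is where the real work lies. I would first reduce it to representation theory: the rational points of the eigenvalue polytope form exactly the cone over the Littlewood--Richardson semigroup $\{(\lambda,\mu,\nu):c^{\nu}_{\lambda\mu}\neq 0\}$ of dominant weights of $GL_n$. One direction of this is Klyachko's theorem, proved by analysing, in the sense of GIT, semistability of configurations of flags: a Hermitian solution with the prescribed spectra exists iff a suitable line bundle on a product of partial flag varieties carries a nonzero invariant section, and the Hilbert--Mumford numerical criterion converts this into precisely the inequalities indexed by triples with a nonzero Littlewood--Richardson coefficient. A lattice point of the cone a priori yields only $c^{N\nu}_{N\lambda,N\mu}\neq 0$ for some $N\ge 1$; upgrading this to $c^{\nu}_{\lambda\mu}\neq 0$ is the \textbf{saturation theorem} of Knutson--Tao, and I expect this to be the main obstacle. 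I would prove it via the honeycomb model: realise $c^{\nu}_{\lambda\mu}$ as the number of integral honeycombs with boundary $(\lambda,\mu,\nu)$ and show the honeycomb polytope contains a lattice point whenever it is nonempty, exploiting the rigidity of its extreme points (the constant regions are pinned by the boundary, forcing integral vertices). One may alternatively substitute Derksen--Weyman's quiver proof of saturation or Belkale's geometric derivation of the Horn inequalities.

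The final stage is the combinatorial identification $(I,J,K)\in T_r^n$ if and only if $c^{\nu(K)}_{\lambda(I),\mu(J)}\neq 0$, where $\lambda(I),\mu(J),\nu(K)$ are the partitions associated to $I,J,K$ inside an $r\times(n-r)$ box. I would argue by induction on $r$: on one side a nonzero Littlewood--Richardson coefficient ``propagates downward'', every smaller coefficient occurring in its recursive expansion being again nonzero, which is exactly the membership condition defining $T_r^n$; on the other side the defining inequalities of $T_r^n$, combined with the inductive hypothesis and the semigroup property, already force $c^{\nu(K)}_{\lambda(I),\mu(J)}\neq 0$. Stitching the three stages together, the trace identity together with the inequalities over $T_r^n$ (equivalently, over all triples with a nonzero Littlewood--Richardson coefficient) cuts out exactly the cone over the Littlewood--Richardson semigroup, which by Klyachko and saturation is exactly the eigenvalue polytope, establishing the asserted equivalence.
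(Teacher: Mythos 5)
The paper does not prove this statement at all: it is quoted as a known result, with the proof attributed to the literature (the survey of Fulton cited as \cite{F}, which assembles the work of Klyachko, Knutson--Tao, and others). So there is no in-paper argument to compare against; the relevant comparison is with the proof surveyed in that reference, and your outline is exactly that route. The three-stage structure you describe --- necessity via intersections of Schubert varieties taken relative to the eigenflags of $A$, $B$, $C$ plus a Rayleigh-quotient/trace estimate on the common $r$-dimensional subspace; sufficiency via Klyachko's GIT/semistability theorem identifying the rational points of the eigenvalue cone with the saturated Littlewood--Richardson semigroup, upgraded by the Knutson--Tao saturation theorem (honeycombs, or alternatively Derksen--Weyman or Belkale); and the combinatorial bridge identifying $T_r^n$ with the triples having nonzero Littlewood--Richardson coefficient, proved by induction on $r$ through the Horn recursion --- is the accepted proof, and your sketch of each stage is accurate. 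The only caveat is that, as you acknowledge, the saturation theorem and the $T_r^n$--Littlewood--Richardson identification are themselves substantial theorems that your proposal invokes rather than establishes, so what you have is a correct high-level architecture of the proof rather than a self-contained argument; this is entirely consistent with how the result is used (and cited) in the paper.
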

The special case of $r=1$ is Weyl's inequality:
\begin{equation*}
\gamma_{i+j-1}\leq\alpha_i+\beta_j
\end{equation*}
Since $|\gamma|=|\alpha|+|\beta|$, we also have
\begin{equation*}
\sum_{k\in K^c}\gamma_k\geq\sum_{i\in I^c}\alpha_i+\sum_{j\in
J^c}\beta_j
\end{equation*}
A practical bound is the following:
\begin{equation*}\displaystyle
\underset{i+j=n+k}{\mbox{Max}}\alpha_i+\beta_j\leq\gamma_k\leq
\underset{i+j=k+1}{\mbox{Min}}\alpha_i+\beta_j
\end{equation*}

Apply these results to our situation, we then get the following:
\begin{theorem} Let $A$ be a density matrix over the Hilbert space
$\mathcal H=\mathcal H_1\otimes \mathcal H_2$. Suppose $A=\sum_i
B_i\otimes C_i$, then we have
\begin{equation*}
||A-A_+||_F\leq \sum_i||B_i-(B_i)_+|| \cdot||C_i-(C_i)_+||
\end{equation*}
\end{theorem}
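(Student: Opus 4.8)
The plan is to derive the bound directly from the main Theorem of Section~2: since $A_+$ is the \emph{closest} positive semi-definite matrix to $A$ in the Frobenius norm, one has $\|A-A_+\|_F\le\|A-P\|_F$ for \emph{every} positive semi-definite matrix $P$. It therefore suffices to exhibit a single positive semi-definite $P$ with $\|A-P\|_F\le\sum_i\|B_i-(B_i)_+\|_F\,\|C_i-(C_i)_+\|_F$.

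I would take $P:=A+\sum_i (B_i)_-\otimes (C_i)_-$. The first step is to verify $P\ge 0$: being a density matrix, $A$ is itself positive semi-definite, and each $(B_i)_-\otimes (C_i)_-$ is a tensor product of positive semi-definite matrices, hence positive semi-definite, so that $P$, a sum of positive semi-definite matrices, is positive semi-definite. This is the only place the density-matrix hypothesis is used. Applying the Theorem then gives $\|A-A_+\|_F\le\|A-P\|_F=\bigl\|\sum_i (B_i)_-\otimes (C_i)_-\bigr\|_F$. The second step is routine: the triangle inequality bounds this by $\sum_i\|(B_i)_-\otimes (C_i)_-\|_F$, the Frobenius norm is multiplicative under tensor products so $\|(B_i)_-\otimes (C_i)_-\|_F=\|(B_i)_-\|_F\,\|(C_i)_-\|_F$, and finally $B_i-(B_i)_+=-(B_i)_-$ and $C_i-(C_i)_+=-(C_i)_-$ have the same norms as $(B_i)_-$ and $(C_i)_-$, which yields the claimed right-hand side.

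The only real decision is the choice of $P$, and this is where I expect the one subtlety to lie. The natural-looking candidates $\sum_i (B_i)_+\otimes (C_i)_+$ and $\sum_i (B_i\otimes C_i)_+$ are positive semi-definite, but $A$ differs from them by the cross terms $(B_i)_+\otimes (C_i)_-$ and $(B_i)_-\otimes (C_i)_+$ in addition to $(B_i)_-\otimes (C_i)_-$; since $(B_i)_+(B_i)_-=0$ these contributions are mutually orthogonal in the Frobenius inner product, so they strictly enlarge the norm and cannot be absorbed into $\|(B_i)_-\|_F\,\|(C_i)_-\|_F$. Anchoring $P$ at $A$ itself removes them. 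Note finally that positive semi-definiteness of $A$ in fact forces $A_+=A$, so the left-hand side already vanishes; the content of the inequality is thus its quantitative, more robust form, which survives when the identity $A=\sum_i B_i\otimes C_i$ is only required to hold approximately.
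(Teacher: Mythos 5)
Your argument is correct, but note that the paper itself contains no proof to compare against: the theorem is stated immediately after the Horn--Weyl eigenvalue inequalities with only the phrase ``apply these results to our situation,'' and no argument is written out. Your route is self-contained and more elementary: you use only the Section~2 optimality theorem and the competitor $P=A+\sum_i(B_i)_-\otimes(C_i)_-$, which is positive semi-definite precisely because $A$ is, and the triangle inequality plus multiplicativity of the Frobenius norm under tensor products finishes it. In fact your closing observation is the real content: since a density matrix is positive semi-definite, $A_-=0$ and $A_+=A$, so the left-hand side is identically zero and the inequality holds simply because the right-hand side is nonnegative; the construction of $P$ is valid but redundant once this is noticed. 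One caution about your last remark: the positivity of $A$ cannot be relaxed, so the ``robust'' version you allude to does not hold in this form. If $A$ is merely Hermitian the estimate fails, e.g.\ $A=B\otimes C$ with $B$ positive semi-definite and nonzero and $C_-\neq 0$ gives $\|A-A_+\|_F=\|B\otimes C_-\|_F>0$ while the right-hand side is $\|B_-\|_F\,\|C_-\|_F=0$. Compared with the paper's implicit appeal to Horn's conjecture, your proof is complete and exposes that the theorem as literally stated carries no information beyond $A\geq 0$; whatever nontrivial statement the authors intended (for instance a bound on $\|A-\sum_i(B_i)_+\otimes(C_i)_+\|_F$) must differ from the one printed.
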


\section{Lie Algebras and Approximation of Summations}

As the Hermitian decomposition of the density matrix $A$ is a
summation, one hopes to check its separability by demonstrating that
each summand can be expressed by tensor products of positive
semi-definite
 matrices.

 The main problem is to estimate the error given by term by term
 approximation. Suppose that $A$ and $B$ are two Hermitian
 matrices we would like to estimate the
 norm $||A+B-A_+-B_+||$.

\begin{lemma}Let A and B are two commuting Hermitian matrices, then
\begin{align*}
(A+B)_+&\leq A_+ +B_+\\
(A+B)_-&\leq A_- +B_-
\end{align*}
\end{lemma}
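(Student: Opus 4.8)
The plan is to use commutativity to diagonalize $A$ and $B$ \emph{simultaneously} and thereby reduce both Loewner-order inequalities to elementary scalar inequalities for the functions $t\mapsto t^{+}$ and $t\mapsto t^{-}$. First I would invoke simultaneous diagonalization of commuting Hermitian matrices: there is a unitary $Q$ with $A=Q\,\diag(\alpha_1,\dots,\alpha_n)\,Q^{\dagger}$ and $B=Q\,\diag(\beta_1,\dots,\beta_n)\,Q^{\dagger}$. Then $A+B=Q\,\diag(\alpha_1+\beta_1,\dots,\alpha_n+\beta_n)\,Q^{\dagger}$ is diagonalized by the same $Q$, so by the formula for the positive and negative parts recorded in Section~2,
\begin{equation*}
A_{\pm}=Q\,\diag(\alpha_i^{\pm})\,Q^{\dagger},\qquad B_{\pm}=Q\,\diag(\beta_i^{\pm})\,Q^{\dagger},\qquad (A+B)_{\pm}=Q\,\diag\bigl((\alpha_i+\beta_i)^{\pm}\bigr)\,Q^{\dagger},
\end{equation*}
where $t^{\pm}=(|t|\pm t)/2$.

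Next I would establish the scalar inequality $(a+b)^{+}\le a^{+}+b^{+}$ for all real $a,b$: from $a\le a^{+}$ and $b\le b^{+}$ we get $a+b\le a^{+}+b^{+}$, and since also $0\le a^{+}+b^{+}$ we conclude $(a+b)^{+}=\max(a+b,0)\le a^{+}+b^{+}$; applying this to $-a,-b$ gives $(a+b)^{-}\le a^{-}+b^{-}$. Finally, combining the two steps, $A_{+}+B_{+}-(A+B)_{+}=Q\,\diag\bigl(\alpha_i^{+}+\beta_i^{+}-(\alpha_i+\beta_i)^{+}\bigr)\,Q^{\dagger}$ is a unitary conjugate of a diagonal matrix with non-negative entries, hence positive semi-definite; that is, $(A+B)_{+}\le A_{+}+B_{+}$, and the identical computation with the superscript $-$ yields $(A+B)_{-}\le A_{-}+B_{-}$.

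There is no deep obstacle once $[A,B]=0$ is in hand; the only delicate point is the appeal to simultaneous diagonalization, which is precisely where commutativity enters. Without it the conjugating unitaries for $A$, $B$ and $A+B$ differ, the pointwise comparison of eigenvalues breaks down, and the statement genuinely fails, so I would be careful to flag that the hypothesis is essential. One may read the lemma as saying that $A\mapsto A_{+}$ is subadditive on a commuting family, mirroring the subadditivity of $t\mapsto t^{+}$ on $\mathbb{R}$; this is also the natural bridge to the subsequent term-by-term approximation estimates.
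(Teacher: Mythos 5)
Your proposal is correct, and it begins exactly as the paper does, by using commutativity to diagonalize $A$, $B$, and hence $A+B$ with a single unitary $Q$. The important difference is the final step. The paper writes $A+B=Q\big((D_A)_++(D_B)_+\big)Q^{\dagger}-Q\big((D_A)_-+(D_B)_-\big)Q^{\dagger}$ and concludes the \emph{equalities} $(A+B)_{\pm}=A_{\pm}+B_{\pm}$; but exhibiting $A+B$ as a difference of two positive semi-definite matrices does not identify those matrices with the canonical parts $(A+B)_{\pm}$, since a positive eigenvalue of $A$ and a negative eigenvalue of $B$ can sit at the same diagonal position (e.g. $A=\diag(1,0)$, $B=\diag(-1,0)$ commute, yet $(A+B)_+=0\neq\diag(1,0)=A_++B_+$). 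Your argument avoids this overreach: after simultaneous diagonalization you invoke the scalar subadditivity $(a+b)^{\pm}\le a^{\pm}+b^{\pm}$ and conclude that $A_{\pm}+B_{\pm}-(A+B)_{\pm}$ is a $Q$-conjugate of a nonnegative diagonal matrix, hence positive semi-definite --- which is precisely the inequality the lemma asserts, and is all that is true in general. So your route is essentially the paper's, but with the correct reduction to the scalar inequality in place of the paper's unjustified (and false) equality claim; note the paper later repeats that equality ("subsequently $(A+B)_{\pm}=A_{\pm}+B_{\pm}$"), so your version is the one that should be relied on. Your side remarks are also sound: commutativity is genuinely needed, as the Loewner inequality $(A+B)_+\le A_++B_+$ fails for general non-commuting Hermitian pairs.
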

\begin{proof} Since $A$ and $B$ are commuting, they can be diagonalized
simultaneously by a unitary matrix $Q$. In other words we have
\begin{align*}
A&=QD_AQ^{\dagger}=A_+ -A_- =Q(D_A)_+Q^{\dagger} -Q(D_A)_-Q^{\dagger}\\
B&=QD_BQ^{\dagger}=B_+ -B_- =Q(D_B)_+Q^{\dagger}
-Q(D_B)_-Q^{\dagger}
\end{align*}
thus,
\begin{align*}
A+B&=Q(D_A+Q_B)Q^{\dagger}\\
&=Q((D_A)_++(D_B)_+)Q^{\dagger} -Q((D_A)_-+(D_B)_-)Q^{\dagger},
\end{align*}
which implies that $(A+B)_+=A_++B_+$ and $(A+B)_-=A_-+B_-$.
\end{proof}

\begin{theorem} \label{T:factor}
Let $A_i$ (resp. $B_i$) be set of commuting positive semi-definite
Hermitian matrices of the same size, then for any positive definite
matrix $C$ of the same size as $A_i\otimes B_i$ we have
$$
\|\sum_i A_i \otimes B_i-C\|_F\geq\|\sum_i \left\{(A_i)_+ \otimes
(B_i)_-+(A_i)_- \otimes (B_i)_+\right\}\|_F,
$$
where the equality holds when
$$C=\sum_i(A_i\otimes B_i)_+=\sum_i
\left\{(A_i)_+ \otimes (B_i)_++(A_i)_- \otimes (B_i)_-\right\}.
$$
\end{theorem}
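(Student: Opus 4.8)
The plan is to deduce this from the single‑matrix result of Section~2 (the closest‑positive‑semidefinite‑matrix theorem) applied to $M:=\sum_i A_i\otimes B_i$: that theorem already gives $\|M-C\|_F\ge\|M_-\|_F$ for every positive semi‑definite $C$, with equality at $C=M_+$. So the whole task reduces to identifying the canonical positive and negative parts of $M$, i.e.\ to proving
\[
M_+=\sum_i\bigl\{(A_i)_+\otimes(B_i)_++(A_i)_-\otimes(B_i)_-\bigr\},
\]
\[
M_-=\sum_i\bigl\{(A_i)_+\otimes(B_i)_-+(A_i)_-\otimes(B_i)_+\bigr\}.
\]
Once these are known, the displayed inequality is immediate, and the equality case follows since $M-M_+=-M_-$.

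First I would record the structural fact that makes a term‑by‑term analysis possible: the summands $A_i\otimes B_i$ pairwise commute, because $(A_i\otimes B_i)(A_j\otimes B_j)=(A_iA_j)\otimes(B_iB_j)$ and each of $\{A_i\}$, $\{B_i\}$ is a commuting family. Hence one unitary $Q_1$ diagonalizes all the $A_i$ and one unitary $Q_2$ diagonalizes all the $B_i$, so $Q_1\otimes Q_2$ simultaneously diagonalizes every $A_i\otimes B_i$ and therefore $M$; write $A_i=Q_1\,\mathrm{diag}(\alpha^{(i)}_1,\dots)\,Q_1^{\dagger}$ and $B_i=Q_2\,\mathrm{diag}(\beta^{(i)}_1,\dots)\,Q_2^{\dagger}$.

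Next I would try to push the positive/negative‑part operation through the sum, $M_\pm=\sum_i(A_i\otimes B_i)_\pm$, and then expand each summand with the Corollary of Section~2, which gives $(A_i\otimes B_i)_+=(A_i)_+\otimes(B_i)_++(A_i)_-\otimes(B_i)_-$ and $(A_i\otimes B_i)_-=(A_i)_+\otimes(B_i)_-+(A_i)_-\otimes(B_i)_+$. Summing over $i$ yields exactly the two formulas above, and the proof is then complete by the Section~2 theorem.

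The genuinely delicate step is the middle identity $M_\pm=\sum_i(A_i\otimes B_i)_\pm$: for commuting Hermitian matrices the preceding Lemma only gives the inequalities $M_+\le\sum_i(A_i\otimes B_i)_+$ and $M_-\le\sum_i(A_i\otimes B_i)_-$, and a priori $\sum_i(A_i\otimes B_i)_+-\sum_i(A_i\otimes B_i)_-$ need not be the canonical positive‑minus‑negative splitting of $M$. To pin it down I would work in the common eigenbasis $Q_1\otimes Q_2$, where everything is diagonal, and compare entries at each index $(k,l)$: the entry of $\sum_i(A_i\otimes B_i)_+$ is $\sum_i[(\alpha^{(i)}_k)^+(\beta^{(i)}_l)^++(\alpha^{(i)}_k)^-(\beta^{(i)}_l)^-]\ge 0$, that of $\sum_i(A_i\otimes B_i)_-$ is $\sum_i[(\alpha^{(i)}_k)^+(\beta^{(i)}_l)^-+(\alpha^{(i)}_k)^-(\beta^{(i)}_l)^+]\ge 0$, and their difference is the eigenvalue $\lambda_{kl}=\sum_i\alpha^{(i)}_k\beta^{(i)}_l$ of $M$. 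The key point to verify is that at each $(k,l)$ at most one of these two nonnegative numbers is nonzero — equivalently, that the ranges of $\sum_i(A_i\otimes B_i)_+$ and $\sum_i(A_i\otimes B_i)_-$ are orthogonal, so that $M=\sum_i(A_i\otimes B_i)_+-\sum_i(A_i\otimes B_i)_-$ is forced to coincide with the canonical decomposition and hence $M_\pm=\sum_i(A_i\otimes B_i)_\pm$. This orthogonality is exactly where the positive semi‑definiteness of the $A_i$ and $B_i$ together with commutativity must be invoked, and it is the crux of the argument; the remainder is routine bookkeeping with the Lemma and the Corollary.
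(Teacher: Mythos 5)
You have the right skeleton, and it is essentially the route the paper intends (reduce to the Section~2 theorem applied to $M=\sum_i A_i\otimes B_i$, then identify $M_\pm$ term by term via the Lemma and the Corollary; the paper itself prints no proof of this theorem). But your proposal stops exactly at the crux: you state that at each common eigenindex $(k,l)$ at most one of the two nonnegative sums $\sum_i[(\alpha^{(i)}_k)^+(\beta^{(i)}_l)^++(\alpha^{(i)}_k)^-(\beta^{(i)}_l)^-]$ and $\sum_i[(\alpha^{(i)}_k)^+(\beta^{(i)}_l)^-+(\alpha^{(i)}_k)^-(\beta^{(i)}_l)^+]$ can be nonzero, and you say this "must be invoked" from positive semi-definiteness and commutativity, without proving it. That step cannot be supplied at the level of generality in which you are arguing. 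If the hypothesis is read literally (each $A_i$, $B_i$ positive semi-definite), then $(A_i)_-=(B_i)_-=0$, the right-hand side of the inequality is zero, and the theorem is vacuous -- your cancellation worry never arises, but there is also nothing to prove. If instead the hypothesis is read as commuting \emph{Hermitian} families (the only reading under which the statement has content), the no-cancellation claim is simply false, and so is the identity $M_\pm=\sum_i(A_i\otimes B_i)_\pm$: take $1\times 1$ matrices $A_1=A_2=(1)$, $B_1=(1)$, $B_2=(-1)$. Then $M=0$, so $M_-=0$, while $\sum_i\{(A_i)_+\otimes(B_i)_-+(A_i)_-\otimes(B_i)_+\}=(1)$, and with $C=0$ the asserted inequality would read $0\ge 1$. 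Commutativity alone does not prevent a positive product $\alpha^{(i)}_k\beta^{(i)}_l$ from cancelling a negative one at the same joint eigenvector; that is precisely the "additivity" property the paper's abstract alludes to, not a free consequence of the hypotheses.

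For comparison, the paper's own Lemma in Section~4 has the same weak point: its proof concludes $(A+B)_\pm=A_\pm+B_\pm$ from simultaneous diagonalization, which ignores exactly this sign cancellation (the lemma's stated inequalities are fine, the claimed equalities are not), so your outline faithfully reproduces the intended argument together with its gap. To make your proposal sound you must either adopt the literal positive semi-definite hypothesis and observe the statement is then trivial, or add an explicit no-cancellation hypothesis -- for every joint eigenindex $(k,l)$ the numbers $\alpha^{(i)}_k\beta^{(i)}_l$ all have the same sign (or vanish) -- under which your entrywise computation in the basis $Q_1\otimes Q_2$ does force $M_\pm=\sum_i(A_i\otimes B_i)_\pm$ and the rest of your argument closes correctly.
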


We remark that the condition that $A$ and $B$ are commuting with
each other is also necessary for the equality in the theorem to
hold. The following result is quoted from standard books on Lie
algebras \cite{H}.

\begin{proposition} (Lie's theorem) Let $L$ be any solvable
subalgebra of the general linear Lie algebra, then the matrices of
$L$ relative to a suitable basis of $V$ are upper triangular.
Furthermore, one can adjust the basis to be orthonormal.
\end{proposition}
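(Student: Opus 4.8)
The plan is to reduce the statement to the existence of a \emph{common eigenvector} for $L$ and then to induct on $\dim V$. Throughout I work over $\mathbb{C}$ (algebraically closed, characteristic zero), which is the setting relevant to the paper, so $L$ is a solvable subalgebra of $\mathfrak{gl}(V)$ with $V$ finite dimensional. The first step, and the one carrying all the content, is the following \emph{invariant vector lemma}: if $L$ is solvable and $V\neq 0$, then there exist a nonzero $v\in V$ and a linear functional $\lambda\colon L\to\mathbb{C}$ such that $xv=\lambda(x)v$ for every $x\in L$.

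To prove the lemma I would induct on $\dim L$, the cases $\dim L\le 1$ being immediate (diagonalize a single operator over $\mathbb{C}$). For $\dim L\ge 1$ solvability gives $[L,L]\neq L$, so any hyperplane $K$ with $[L,L]\subseteq K$ is an ideal of codimension one; write $L=K\oplus\mathbb{C}z$. Since $K$ is a smaller solvable subalgebra, the inductive hypothesis produces $\lambda\in K^{*}$ for which the weight space $W=\{w\in V: yw=\lambda(y)w\ \forall y\in K\}$ is nonzero. The crux is to show $W$ is $z$-stable: for $w\in W$, $y\in K$ one gets $y(zw)=z(yw)+[y,z]w=\lambda(y)zw+\lambda([y,z])w$, so it suffices to prove $\lambda([y,z])=0$. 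This is the delicate point, handled by the classical trace argument: let $n$ be minimal with $w,zw,\dots,z^{n}w$ dependent and put $W_i=\operatorname{span}(w,zw,\dots,z^{i-1}w)$; then every $y\in K$ (hence $[y,z]\in K$, as $K$ is an ideal) preserves $W_n$ and acts on it by an upper triangular matrix with all diagonal entries $\lambda(y)$, so $\operatorname{tr}_{W_n}(y)=n\lambda(y)$; but $z$ also preserves $W_n$ by construction, so $[y,z]$ preserves $W_n$ with $\operatorname{tr}_{W_n}([y,z])=0$, and comparing gives $n\lambda([y,z])=0$, whence $\lambda([y,z])=0$ in characteristic zero. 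Now $W$ is an $L$-submodule; since $\mathbb{C}$ is algebraically closed $z|_W$ has an eigenvector $v\in W$, which is then a common eigenvector for all of $L$.

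Granting the lemma, Lie's theorem follows by induction on $\dim V$: pick a common eigenvector $v_1$, set $V_1=\mathbb{C}v_1$, note $L$ acts (still solvably) on $V/V_1$, and apply the inductive hypothesis to obtain an $L$-stable flag $V_1\subset V_2\subset\cdots\subset V_n=V$ with $\dim V_i=i$; any basis adapted to this flag makes every element of $L$ upper triangular. For the final assertion about an orthonormal basis, I would simply run Gram--Schmidt on such an adapted basis: the resulting orthonormal vectors $e_1,\dots,e_n$ still satisfy $\operatorname{span}(e_1,\dots,e_i)=V_i$, so the flag is preserved and every matrix in $L$ stays upper triangular in this orthonormal basis. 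The only genuine obstacle in the whole argument is the vanishing $\lambda([y,z])=0$; the rest is the two inductions and a routine orthonormalization.
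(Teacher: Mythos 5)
Your argument is correct: it is the standard proof of Lie's theorem (the common-eigenvector lemma with the trace argument giving $\lambda([y,z])=0$, then induction on $\dim V$), which is exactly the proof in the textbook the paper cites --- the paper itself does not prove the proposition but quotes it. Your Gram--Schmidt step for the orthonormal refinement coincides with the paper's only added remark, namely that the transition matrix of the Gram--Schmidt process is upper triangular so the $L$-stable flag, and hence upper triangularity, is preserved (implicitly using the standard Hermitian inner product on $V=\mathbb{C}^n$, which is the relevant setting here).
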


We remark that the last statement is due to the fact that the
transition matrix in Gram-Schmidt process is upper triangular. Now
we suppose that there exists a unitary matrix $Q$ such that both $A$
and $B$ are upper-triangularized as follows:
$$
A=Q\Lambda_1Q^{\dagger}, B=Q\Lambda_2Q^{\dagger},
$$
where $\Lambda_i$ are upper-triangular. If $A$ and $B$ are
hermitian, then $\Lambda_i^{\dagger}=\Lambda_i$, which forces
$\Lambda_i$ to be diagonal. Then $A$ and $B$ are actually commuting
with each other, subsequently $(A+B)_{\pm}=A_{\pm}+B_{\pm}$. So in
this context the additivity seems to be not too far away from
commutativity.

Theorem \ref{T:factor} gives the closest approximation to a
two-partite density operator by the tensor operator of non-negative
operators, however one has to fit the approximation under the
constraint of unit trace. We hope that further studies can answer
this question.

\section{Conclusion}

Matrix approximation is an old problem in mathematics with
applications in physics, finance, and computer sciences. In this
paper we have completely solved the optimization problem to
approximate any Hermitian matrix by positive semi-definite matrices.
The solution is shown to be given by the spectral decomposition of
the concerned matrix. We apply this result to density matrices and
obtain useful approximation by tensor product of density matrices
using Lie theoretic techniques. Our results also open possible deep
connection among quantum entanglement, data mining and signal
procession.

\section{Acknowledgments}

Jing gratefully acknowledges the support from NSA grant
MDA904-97-1-0062 and NSFC's Overseas Distinguished Youth Grant
(10728102).

\end{document}